\title{Computational Complexity of Cyclotomic Fast Fourier Transforms
  over Characteristic-2 Fields}
\name{Xuebin Wu and Zhiyuan Yan}
\address{Department of ECE, Lehigh University, Bethlehem, PA 18015\\
  E-mails: \{xuw207, yan\}@lehigh.edu}
\newcommand{\f}{\mathbf{f}}
\renewcommand{\L}{\mathbf{L}}
\newtheorem{thm}{Theorem}
\newtheorem{lem}[thm]{Lemma}
\begin{document}
\maketitle
\begin{abstract}
  Cyclotomic fast Fourier transforms (CFFTs) are efficient
  implementations of discrete Fourier transforms over finite fields,
  which have widespread applications in cryptography and error control
  codes. They are of great interest because of their low
  multiplicative and overall complexities. However, their advantages
  are shown by inspection in the literature, and there is no
  asymptotic computational complexity analysis for CFFTs. Their high
  additive complexity also incurs difficulties in hardware
  implementations. In this paper, we derive the bounds for the
  multiplicative and additive complexities of CFFTs, respectively. Our
  results confirm that CFFTs have the smallest multiplicative
  complexities among all known algorithms while their additive
  complexities render them asymptotically suboptimal. However, CFFTs
  remain valuable as they have the smallest overall complexities for
  most practical lengths. Our additive complexity analysis also leads
  to a structured addition network, which not only has low complexity
  but also is suitable for hardware implementations.
\end{abstract}

\section{Introduction}
Discrete Fourier transforms (DFTs) \cite{BlahutFast} have widespread
applications in error control codes and cryptography, which in turn
are important in almost all digital communication and storage
systems. For example, the syndrome decoders of Reed-Solomon codes
\cite{BlahutECC} require DFTs over finite fields to implement the
syndrome computation and Chien search efficiently (see, e.g.,
\cite{Chen2009}). Multiplications over GF$(p^m)$ can also be
implemented efficiently by DFTs via the convolution theorem
\cite{BlahutFast} when they are formulated as multiplications of
polynomials over GF$(p)$.

Recently, very long DFTs over finite fields are needed in
practice. For example, Reed-Solomon codes over GF$(2^{12})$ with
thousands of symbols are considered for hard drive and tape storage as
well as optical communication systems to increase the data
reliability, and the syndrome decoders of such codes require DFTs of
lengths up to 4095 over GF$(2^{12})$. However, direct implementations
of DFTs have quadratic complexities with the lengths of DFTs, and the
computational complexity is prohibitive for the DFTs with thousands of
symbols. Therefore we need low-complexity algorithms and efficient
hardware implementations for DFTs over finite fields.

The cyclotomic fast Fourier transforms (CFFTs), first proposed in
\cite{Trifonov2003}, have attracted a lot of attention because of
their low multiplicative and overall complexities. Though these
advantages of CFFTs have been demonstrated for short to moderate
lengths in the literature (see, e.g., \cite{Chen2009c}), it is unclear
if they still hold for large lengths. Therefore asymptotic
computational complexity analysis is required to compare the
complexities of CFFTs with other existing DFT algorithms over finite
fields \cite{Wang1988,Cantor1991,Mateer2008}, which can help system
designers to find the optimal implementation of very long DFTs.

Another issue regarding the CFFTs is their relatively high additive
complexities, which hinder their usages. Though the additive
complexities of CFFTs can be reduced by the common expression
elimination (CSE) algorithm in \cite{Chen2009c}, the lack of addition
network structure increases the difficulty of wiring and module
reusing, and introduces other problems to the hardware
implementation. Therefore, a structured additive complexity reduction
method is appreciated for CFFTs.

In this paper, we analyze the asymptotic computational complexities of
CFFTs and derive bounds on the multiplicative and additive
complexities of CFFTs. The comparisons between our results and
existing algorithms show that CFFTs have the smallest multiplicative
complexity, but their high additive complexities render them not
asymptotically optimal. However, CFFTs are still valuable as they have
the smallest overall complexities for most DFTs with practical
lengths. Our additive complexity analysis also leads to a structured
addition network, which not only has low complexity but also is
suitable for hardware implementations.


\section{Cyclotomic Fast Fourier Transforms}
\label{sec:CFFT}
To make our paper self-contained, we first review CFFTs over GF$(2^m)$
\cite{Trifonov2003} briefly in this section.  Let $\alpha \in
\mbox{GF}(2^m)$ be an element of order $n$, where $n|2^m-1$. Consider
an $n$-dimensional vector $\f=(f_0, f_1, \cdots, f_{n-1})^T$ over
GF$(2^m)$, whose \emph{polynomial representation} is given by $f(x) =
\sum_{i=0}^{n-1}f_ix^i$. The DFT of $\f$ is $\mathbf{F}=(F_0, F_1,
\cdots, F_{n-1})^T$, where $F_j=f(\alpha^{j})$.

We partition the set of integers $\{0, 1, \cdots, n-1\}$ into $k$
cyclotomic cosets modulus $n$ with respect to two as:
\begin{align*}
  &\{0\},\, \{s_1, 2s_1, \cdots, 2^{m_1-1}s_1\},\,\{s_2, 2s_2, \cdots,
  2^{m_2-1}s_2\},\,\cdots \\
  &\{s_{k-1}, 2s_{k-1},\cdots, 2^{m_{k-1}-1}s_{k-1}\},
\end{align*}
where $m_i$ is the size of the $i$-th cyclotomic coset, and
$s_i=2^{m_i}s_i \pmod{n-1}$ is its representative. Then the polynomial
$f(x)$ can be decomposed as $f(x)=\sum_{i=0}^{k-1}L_i(x^{s_i})$, where
$ L_i(y)=\sum_{j=0}^{m_i-1}f_{2^js_i\,\mathrm{mod}\,n}\,y^{2^j}$. The
polynomial $L_i(x)$ has a property such that $L_i(x+y)=L_i(x)+L_i(y)$,
for $x, y\in \textrm{GF}(2^m)$, which is used to reduce the DFT
computational complexity in CFFT.

The element $F_j$ in the DFT result $\mathbf{F}$ can be expressed as
$F_j=f(\alpha^j)=\sum_{i=0}^{k-1}L_i(\alpha^{js_i})$. By the normal
basis theorem \cite{Lidl1996}, there is a normal basis
$\{\gamma_i^{2^0}, \gamma_i^{2^1}, \cdots, \gamma_i^{2^{m_i-1}}\}$ of
GF$(2^{m_i})$, such that $\alpha^{s_i} \in \mbox{GF}(2^m)$ can be
represented as $\sum_{s=0}^{m_i-1}a_{i,j,s}\gamma_{i}^{2^s}$, where
$a_{i,j,s}$ is binary. Therefore
\begin{align*}
  f(\alpha^j) =\sum_{i=0}^{k-1}\sum_{s=0}^{m_i-1}a_{i,j,s}\left(\sum_{t=0}^{m_i-1}\gamma_i^{2^{s+t\,\mathrm{mod}\,m_i}}f_{s_i2^t\,\mathrm{mod}\,n}\right).
\end{align*}
Writing in the matrix form, we have that the DFT can be computed by
$\mathbf{F}=\mathbf{A}\mathbf{L}\f'$, where $\f'$ is a rearrangement
of $\f$ according to the cyclotomic coset, i.e., $\f'=(\f_0', \f_1',
\cdots, \f_{k-1}')^T$ with $\f_i'=(f_{s_i}, f_{2s_i}, \cdots,
f_{2^{m_i-1}s_i})$, $\mathbf{A}$ is an $n\times n$ binary matrix
accumulating the coefficients $a_{i,j,s}$, and $\mathbf{L}$ is a block
diagonal matrix with sub-matrices $\L_i$'s on its diagonal. Block
$\L_i$ is an $m_i\times m_i$ circulant matrix corresponding to a
cyclotomic coset of size $m_i$, and it is generated from a normal
basis $\{\gamma_i^{2^0}, \gamma_i^{2^1}, \cdots,
\gamma_i^{2^{m_i-1}}\}$ of GF$(2^{m_i})$.
Therefore, the multiplication between $\L_i$ and $\f_i'$ can be
formulated as an $m_i$-point cyclic convolution between
$\mathbf{b}_i=(\gamma_i^{2^0}, \gamma_i^{2^{m_i-1}},
\gamma_i^{2^{m_i-2}}, \cdots, \gamma_i^{2^{1}})^T$ and $\f_i'$. Since
the matrix $\mathbf{A}$ is binary, the product between $\mathbf{A}$
and the vector $\mathbf{v}=\L\f'$ can be simply computed by
additions. All the multiplications needed by CFFTs are contributed by
the convolutions between $\mathbf{b}_i$ and $\f_i'$. Because the short
convolutions can be computed by efficient bilinear algorithms (see,
e.g., \cite{BlahutFast}), CFFTs have very low multiplicative
complexities. However, if implemented directly, they will have very
high additive complexities.

\section{Computational Complexities of CFFTs over Characteristic-2
  Fields}
\label{sec:complexity}
In our complexity analysis of CFFTs, we aim to theoretically show that
their multiplicative complexities are the smallest among all known
techniques and to investigate the optimality of the overall
computational complexities of CFFTs. For this effort, we focus on
CFFTs of length $n=2^m-1$ over GF$(2^m)$.

We denote the cyclotomic cosets of the set $\{0, 1, \cdots, n-1\}$
modulus $n$ with respect to two as $C_0$, $C_1$, $\cdots$, $C_{k-1}$,
and assume that $C_i$ has $m_i$ elements with a representative
$s_i$. It is required that $m_i$ divides $m$, i.e., $m_i | m$. We
divide $C_i$'s into $d$ groups --- $G_0$, $G_1$, $\cdots$, $G_{d-1}$
--- so that $C_i$'s in each group are of the same size. We denote the
size of $G_j$ as $|G_j|$.

As described in Sec.~\ref{sec:CFFT}, an $n$-point CFFT is given by
$\mathbf{AL}\mathbf{f}'$, where the matrix $\mathbf{A}$ is binary. The
product of the matrix $\mathbf{L}$ and the vector $\mathbf{f}'$, i.e.,
a vector $\mathbf{v}=\mathbf{Lf}'$, is computed via $k$ cyclic
convolutions, with $\mathbf{L}_i\mathbf{f}_i$ being an $m_i$-point
cyclic convolution. It is a well-known result that an $n$-point cyclic
convolution requires $O(n^{\log_2 3})$ multiplications and additions,
respectively \cite{Winograd1980}. The $k$ cyclic convolutions
contribute to both the multiplicative and additive complexities of the
CFFT, while computing $\mathbf{Av}$ only contributes to the additive
complexity since $\mathbf{A}$ is binary.

\subsection{Multiplicative Complexities of CFFTs over GF$(2^m)$}
\label{sec:multcfft}
By the definition of big~$O$ notation, an $m_i$-point cyclic
convolution has a multiplicative complexity less than
$cm_i^{\log_23}$, where $c$ is a constant independent with
$m_i$. Hence the total multiplicative complexity of an $n$-point CFFT
is less than $ c\sum_{i=0}^{k-1}m_i^{\log_23} $.  As introduced in the
beginning of this section, we can group the cyclotomic cosets
according to their sizes into $d$ groups, and each group $G_j$ has
$|G_j|$ cyclotomic cosets. We then have that the size of the cosets in
$G_j$, given by $g_j$, divides $m$, i.e., $g_j|m$, and also $d\le
m$. Since $\log_2 3>1$, we have
$\frac{m}{g_j}(g_j)^{\log_23}=m(g_j)^{\log_2\frac{3}{2}}\le
m(m)^{\log_2\frac{3}{2}}=m^{\log_23}$. Hence, the total multiplicative
complexity satisfies
\begin{align*}
  &c\sum_{i=0}^{k-1}m_i^{\log_23}=c\sum_{j=0}^{d-1}|G_j|g_j^{\log_23}\\
  =&c\sum_{j=0}^{d-1}\lfloor\frac{|G_j|g_j}{m}\rfloor
  \frac{m}{g_j}g_j^{\log_23}+c\sum_{j=0}^{d-1} (|G_j| \bmod m/g_j)
  g_j^{\log_23}\\
  \le&2c\frac{2^m-1}{m} m^{\log_23},
\end{align*}
when $m\ge 4$ since $d\le m \le (2^m-1)/{m}$ in such cases. Since we
are considering the asymptotic complexity, we do not need to consider
the case $m<4$. The total multiplicative complexity of an $n$-point
CFFT is thus $O(n(\log_2n)^{\log_2\frac{3}{2}})$ since $m =
\log_2(n+1)$.

Unfortunately, this bound on multiplicative complexities of CFFTs
cannot be generalized to an arbitrary $n$. This can be shown by
counterexamples. For instance, for some lengths (say $n=11$ or $13$),
the set of integers $\{0, 1, \cdots, n-1\}$ is partitioned into only
two cyclotomic cosets, $\{0\}$ and $\{1,2,\cdots,n-1\}$. Hence, the
total multiplicative complexities of CFFTs of these lengths are on the
order of $O(n^{\log_23})$.

\subsection{Additive Complexities of CFFTs over GF$(2^m)$}
\label{sec:addcfft}

Both the convolutions and multiplication between the binary matrix
$\mathbf{A}$ and the vector $\mathbf{v}=\mathbf{L}\mathbf{f}'$
contribute to the additive complexity of an $n$-point CFFT over
GF$(2^m)$ with $n=2^m-1$. Since the additive and multiplicative
complexities of a cyclic convolution have the same order, the total
additive complexity contributed by the convolutions is
$O(n(\log_2n)^{\log_2\frac{3}{2}})$.  However, the additive
complexities of CFFTs are dominated by the computing
$\mathbf{Av}$. Since $\mathbf{A}$ consists of only $0$ and $1$, only
addition is needed to compute $\mathbf{Av}$. We will derive the
additive complexity of $\mathbf{A}\mathbf{v}$.

The Four-Russian algorithm \cite{Aho1974} is an efficient algorithm
for binary matrix multiplication, and it requires $O(n^2/\log_2 n)$
additions for a multiplication between an $n\times n$ matrix and an
$n$-dimensional vector, referred to as $n\times n$ matrix vector
product (MVP). However, it does not consider the structure of
$\mathbf{M}$. Next we further reduce the additive complexity of
computing $\mathbf{A}\mathbf{v}$ by exploring the inner structure of
the matrix $\mathbf{A}$.

As shown in Sec.~\ref{sec:CFFT}, for an $n$-point CFFT over GF$(2^m)$
where $n=2^m-1$, the matrix $\mathbf{A}$ can be partitioned into
$1\times k$ blocks, and each block $\mathbf{A}_i$ is of size
$(2^{m}-1)\times m_i$, and its row $j$ is the representation of
$\alpha^{js_i}$ under a normal basis in the field GF$(2^{m_i})$, where
$\alpha$ is an element in GF$(2^m)$ of order $n$.


We first rearrange the rows of the matrix $\mathbf{A}$ according to
the cyclotomic cosets. The rearrangement will result in a new matrix
$\mathbf{A}'$, which can be partitioned into $k\times k$ blocks. Each
block $\mathbf{A}'_{ij}$ is of size $m_i\times m_j$, and row $t$
in the block $\mathbf{A}'_{ij}$ is the representation of
$\alpha^{2^ts_is_j}$ under a normal basis in GF$(2^{m_j})$. By the
property of normal bases, we know that row $t$ is just a right
cyclic shift of the previous row, and hence $\mathbf{A}'_{ij}$ is a
cyclic matrix \cite{Fedorenko2006}. We then partition the vector
$\mathbf{v}$ into $k$ blocks correspondingly, and the block
$\mathbf{v}_i$ has $m_i$ elements. The product
$\mathbf{Av}$ can be recovered by reordering the elements in the
vector $\mathbf{A}'\mathbf{v}$.

All those $m_i\times m_j$ blocks can be extended to $m \times m$
matrices while keeping the cyclic property.  Since $m_i$ and $m_j$ are
all factors of $m$, we first partition an $m\times m$ matrix into
$\frac{m}{m_i}\times \frac{m}{m_j}$ blocks of size $m_i\times m_j$,
and then set each block to $\mathbf{A}_{ij}$. The resulting $m\times
m$ matrix is still a cyclic matrix. After extending all the blocks to
$m\times m$ blocks in this way, we will get a $km\times km$ matrix
$\mathbf{A}''$.  To ensure that we can recover the multiplication
result $\mathbf{Av}$, we should also extend each sub-vector
$\mathbf{v}_i$ to a vector of length $m$ by padding zeros in the end,
resulting in a $km$-dimensional vector $\mathbf{v}''$.  The elements
in $\mathbf{A}''\mathbf{v}''$ corresponding to the extended rows are
simply discarded.

To utilize this cyclic sub-matrices structure, we construct a new
matrix $\mathbf{B}$ and a new vector $\mathbf{u}$ from $\mathbf{A}''$
and $\mathbf{v}''$, respectively according to the following rules:
\begin{equation}
  \label{eq:reorder}
  B_{i_2k+i_1,j_2k+j_1}=A''_{i_1m+i_2,j_1m+j_2},\,
u_{i_2k+i_1}=v''_{i_1m+i_2},
\end{equation}
where $0\le i_1, j_1 < k$, $0 \le i_2, j_2 < m$, $A''_{i,j}$,
$B_{i,j}$ are the elements in row $i$ and column $j$ in the matrix
$\mathbf{A}''$ and $\mathbf{B}$, respectively, and $u_i$ and $v''_i$
are the elements at position $i$ in the vector $\mathbf{u}$ and
$\mathbf{v}''$, respectively. The matrix $\mathbf{B}$ just reorders
the rows and columns of $\mathbf{A}''$, and reordering the vector
$\mathbf{v}''$ into $\mathbf{u}$ ensures that the product
$\mathbf{Av}$ can be extracted by reordering $\mathbf{Bu}$ without
additional computational complexity. Since $\mathbf{A}''$ contains
$k\times k$ blocks of cyclic matrices of size $m\times m$, the matrix
$\mathbf{B}$ is a block-cyclic matrix with $m\times m$ block matrices
of size $k\times k$.

Since the result $\mathbf{Av}$ can be extracted from $\mathbf{Bu}$
without any additional computational complexity, the computational
complexity of $\mathbf{Bu}$ serves as an upper bound of that of
$\mathbf{Av}$. Now let us analyze the computational complexity of
$\mathbf{Bu}$. The matrix $\mathbf{B}$ is an $m\times m$ block-cyclic
matrix, therefore it can be computed via $O(m^{\log_23})$
multiplications between a $k \times k$ matrix and a $k$-dimensional
vector and $O(m^{\log_23})$ additions of two $k$-dimensional vectors
\cite{Winograd1980}. Since the matrix $\mathbf{B}$ is a fixed one, all
the additions between $k\times k$ matrices can be precomputed, and it
does not contribute to the additive complexity. Applying the
Four-Russian algorithm, the multiplication between a $k\times k$
matrix and a $k$-dimensional vector requires $O(k^2/\log_2 k)$
additions. The addition between two $k$-dimensional vectors requires
$k$ additions, and hence the total computational complexity can be
written as
$$O(m^{\log_23}\frac{k^2}{\log_2k})+O(m^{\log_23}k)=O(m^{\log_23}\frac{k^2}{\log_2k}).$$
We need to find out the lower and upper bounds of $k$. Before giving
these bounds, let us prove two lemmas.
\begin{lem}
  \label{lem:cosetsize}
  In the cyclotomic cosets of $\{0,1,\cdots, 2^m-2\}$ modulus $2^m-1$
  with respect to two, there are at most $(2^{m_i}-1)/m_i$ cosets with
  size $m_i$, where $m_i|m$.
\end{lem}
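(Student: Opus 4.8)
The plan is to count the integers in $\{0,1,\dots,2^m-2\}$ whose cyclotomic coset has size \emph{dividing} $m_i$, obtain an exact count for this set, and then divide by $m_i$ using the disjointness of distinct cosets.

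First I would recall the characterization of coset size: the coset of a representative $s$ has size equal to the least positive integer $\ell$ with $2^\ell s \equiv s \pmod{2^m-1}$, equivalently the least $\ell$ with $(2^m-1)\mid s(2^\ell-1)$. Consequently, an integer $s$ lies in a coset whose size divides $m_i$ exactly when $(2^m-1)\mid s(2^{m_i}-1)$.

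The key step is a divisibility reduction. Since $m_i\mid m$, we have $(2^{m_i}-1)\mid(2^m-1)$, so $r=(2^m-1)/(2^{m_i}-1)$ is an integer and $2^m-1=r(2^{m_i}-1)$. Substituting, the condition $(2^m-1)\mid s(2^{m_i}-1)$ becomes $r(2^{m_i}-1)\mid s(2^{m_i}-1)$, i.e.\ $r\mid s$. Hence the integers in $\{0,1,\dots,2^m-2\}$ belonging to cosets of size dividing $m_i$ are precisely the multiples of $r$, of which there are exactly $(2^m-1)/r=2^{m_i}-1$.

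Finally I would assemble the bound. Every coset of size exactly $m_i$ consists of $m_i$ distinct elements, each necessarily of coset size $m_i$ and therefore lying in the pool of $2^{m_i}-1$ multiples of $r$ just described. Because distinct cosets are disjoint, $N$ cosets of size $m_i$ account for $Nm_i$ distinct elements drawn from this pool, forcing $Nm_i\le 2^{m_i}-1$ and hence $N\le(2^{m_i}-1)/m_i$. The only point requiring care is that the pool also contains elements whose coset size is a proper divisor of $m_i$ (and the element $0$); this merely means the pool over-counts, which is harmless since we need only an upper bound. I expect this packaging --- working with ``size dividing $m_i$'' rather than ``size exactly $m_i$'' --- to be the main conceptual step, as it is precisely what makes the count exact and the divisibility reduction clean.
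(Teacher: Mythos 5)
Your proof is correct, but it takes a genuinely different route from the paper's. The paper argues field-theoretically: it represents each nonzero element $\alpha^j \in \mathrm{GF}(2^m)$ under a normal basis, observes that squaring (moving to $2j$ in the coset) acts as a cyclic shift on the coordinate vector, and concludes that an exponent $j$ in a coset of size $m_i$ must have a coordinate vector consisting of $m/m_i$ identical blocks of length $m_i$ --- so there are at most $2^{m_i}-1$ such nonzero vectors, hence at most $(2^{m_i}-1)/m_i$ cosets. You instead argue purely number-theoretically: the coset size of $s$ is the least $\ell$ with $(2^m-1)\mid s(2^\ell-1)$, so ``size dividing $m_i$'' is equivalent to $(2^m-1)\mid s(2^{m_i}-1)$, which by the divisibility $(2^{m_i}-1)\mid(2^m-1)$ reduces to $r\mid s$ with $r=(2^m-1)/(2^{m_i}-1)$; counting multiples of $r$ gives the pool of size $2^{m_i}-1$, and disjointness of cosets finishes. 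Both proofs count the same pool (exponents whose coset size divides $m_i$) and get the same number, but your argument is more elementary and self-contained --- it needs no normal basis theorem and no field structure at all, only divisibility of Mersenne-type numbers and the standard order-divides argument --- and it also makes fully explicit the final packing step ($Nm_i\le 2^{m_i}-1$) that the paper leaves implicit. The paper's version, in turn, fits naturally into its surrounding context, where normal bases are already the central tool of the CFFT construction, and gives a concrete combinatorial picture (periodic binary vectors) of the elements being counted.
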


\begin{proof}
  Consider the nonzero elements in the finite field GF$(2^m)$, which
  can be represented as $\alpha^j$ and $\alpha$ is a primitive element
  in GF$(2^m)$. By normal basis theorem \cite{Lidl1996}, there is at
  least one normal basis in GF$(2^m)$. Let us pick a normal basis
  $\{\gamma^{2^0}, \gamma^{2^1}, \cdots, \gamma^{2^{m-1}}\}$ in
  GF$(2^m)$. Each element in GF$(2^m)$ has an $m$-bit binary vector
  representation under this basis, i.e.,
  $\alpha^j=\sum_{i=0}^{m-1}b_i\gamma^{2^i}$, and
  $(b_{m-1}b_{m-2}\cdots b_0)$ is the vector representation of
  $\alpha^j$.

  It is easy to see that the vector representation of $\alpha^{2j}$ is
  just a left cyclic shift of that of $\alpha^j$. Therefore, if an
  integer $j$ is in the cyclotomic coset $C_i$, the vector
  representation of $\alpha^j$ repeats itself after $m_i$ shifts,
  where $m_i$ is the size of $C_i$. If $m_i < m$, then $m_i|m$, and
  the vector representation of $\alpha^j$ can be partitioned into
  $\frac{m}{m_i}$ blocks, all of which are identical and have the same
  size $m_i$, otherwise it cannot repeat itself after $m_i$ cyclic
  shifts. Therefore, there are at most $(2^{m_i}-1)/m_i$ cyclotomic
  cosets with size $m_i$.
\end{proof}

\begin{lem}
  \label{lem:kbound}
  $2^m-1 < km < 2(2^m-1)$, where $m$ is a positive integer and $k$ is
  the number of the cyclotomic cosets of $\{0,1, \cdots, 2^m-2\}$
  modulus $2^m-1$ with respect to two.
\end{lem}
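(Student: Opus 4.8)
The plan is to prove the two inequalities separately, using that the $k$ cyclotomic cosets partition $\{0,1,\dots,2^m-2\}$, so that $\sum_{i=0}^{k-1} m_i = 2^m-1$, and that every coset size divides $m$, i.e.\ $m_i \le m$.

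For the lower bound $2^m-1 < km$, I would simply note that $2^m-1 = \sum_i m_i \le \sum_i m = km$, and then argue the inequality is strict: the coset $\{0\}$ has size $1$, which is smaller than $m$ whenever $m\ge 2$, so not all $m_i$ equal $m$ and the bound $m_i \le m$ cannot be tight in every term. (The degenerate case $m=1$, where $km = 2^m-1$, lies outside the asymptotic regime of interest.)

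The upper bound $km < 2(2^m-1)$ is the substantive direction, and here I would lean on Lemma~\ref{lem:cosetsize}. Grouping the cosets by their common size $d$, which must be a divisor of $m$, that lemma bounds the number of size-$d$ cosets by $(2^d-1)/d$, so that $k \le \sum_{d\mid m}(2^d-1)/d$. I would then isolate the term $d=m$, whose contribution is exactly $(2^m-1)/m$, i.e.\ exactly $2^m-1$ after multiplying through by $m$; this is the ``main'' term. The remaining terms run only over proper divisors $d \le m/2$, so their sum is dominated by the largest such term and is therefore exponentially smaller than $2^m-1$, even after multiplying by $m$. Putting the pieces together gives $km \le (2^m-1) + m\sum_{d\mid m,\,d<m}(2^d-1)/d < 2(2^m-1)$.

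The main obstacle is making this last estimate uniform in $m$. The crude bound $\sum_{d\mid m,\,d<m}(2^d-1)/d < 2^{m/2+1}$, obtained by replacing divisors with all integers up to $m/2$, loses a factor of $m$, so the resulting inequality $m\cdot 2^{m/2+1} < 2^m-1$ only holds once $m$ is moderately large (roughly $m\ge 9$); the finitely many remaining cases $m\le 8$ would be dispatched by direct computation of the actual divisor sums, which are far smaller because $m$ has very few divisors. Alternatively, retaining the $1/d$ weights and bounding the proper-divisor sum by a geometric series dominated by its largest term $d\le m/2$ gives a correction of order $2^{m/2}$ with no spurious factor of $m$, yielding the inequality for all $m\ge 2$ without case analysis; I expect this sharper route to be the cleaner way to close the gap.
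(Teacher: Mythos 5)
Your proposal takes essentially the same approach as the paper: the lower bound follows from $\sum_i m_i = 2^m-1$ with each $m_i \le m$, and the upper bound isolates the size-$m$ cosets via Lemma~\ref{lem:cosetsize}, bounds the remaining proper-divisor contribution crudely by roughly $m\cdot 2^{m/2+1}$, and disposes of small $m$ by direct verification (the paper does exactly this, using the analytic bound for $m\ge 10$ and inspection for $m\le 9$). The only caveat is that your sketched ``sharper'' alternative claiming to avoid case analysis for all $m\ge 2$ is not fully worked out as stated (the weighted proper-divisor sum $\sum_{j\mid m,\, j\ge 2} j(2^{m/j}-1)$ is not literally a geometric series, so dominating it by its largest term still needs an argument), but since that is offered only as an optional refinement, your primary argument is correct and matches the paper's.
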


\begin{proof}
  The lower bound of $km$ comes from the fact that $m$ is the maximum
  cyclotomic coset size. It suffices to prove the upper bound of $km$.

  Without loss of generality, we assume that the group $G_0$ contains
  the cosets with a size of $m$, and other groups contain the cosets
  with sizes less than $m$.  Therefore by Lemma~\ref{lem:cosetsize} we
  have
  \begin{align}
    km &= |G_0|m + \sum\nolimits_{j=1}^{d-1}|G_j|m \nonumber \\
    & \le (2^m-1) + m\sum\nolimits_{m_i=1,
      m_i|m}^{\lfloor\frac{m}{2}\rfloor} \frac{2^{m_i}-1}{m_i}
    \nonumber \\
    & \le (2^m-1) + m\sum\nolimits_{m_i=1}^{\lfloor\frac{m}{2}\rfloor}2^{m_i}
    \nonumber \\
    & \le (2^m-1)+m(2^{\frac{m+2}{2}}-1).
    \label{eq:coset1}
  \end{align}
  Consider the function $f(x)=x-2^{\frac{x-2}{2}}$. It is easy to
  check that $f'(x)=1-(0.5\ln 2) 2^{\frac{x-2}{2}}<0$ when $x\ge 8$,
  which means $f(x)$ is strictly decreasing when $x \ge 8$. We can
  also check that $f(9)<0$, and hence $f(x)\le f(9)<0$ when $x\ge
  9$. Therefore, for an integer $m \ge 9$, we have $f(m)<0$, and $m\le
  2^{\frac{m-2}{2}}$. Substituting this in \eqref{eq:coset1}, we have
  $km\le (2^m-1)+(2^m-2^{\frac{m-2}{2}})\le 2(2^m-1)$
  when $m\ge 10$. For $m\le 9$, the lemma can be verified by
  inspection.
\end{proof}

We have shown that the total computational complexity of evaluating
$\mathbf{Bu}$ is $O(m^{\log_23} k^2/\log_2k)$ additions, hence there
exists a constant $c$ independent of $m$ and $k$ such that the total
computational complexity is less than $cm^{\log_23} k^2/\log_2k$
additions. By Lemma~\ref{lem:kbound}, we have
\begin{align}
  cm^{\log_23} \frac{k^2}{\log_2k} 
  & \le cm^{\log_2 3}
  \frac{4(2^m-1)^2}{m^2(m+\log_2\frac{1-2^{-m}}{m})}.
  \label{eq:O1}
\end{align}
Consider the function $f(x)=2^{x}-2^{-x}-2x$. We can show that $f'(x)
= (2^x+2^{-x})\ln 2 -2 > 0$ when $x\ge 2$, and $f(3)>0$. Therefore,
$f(x)>f(3)>0$ when $x\ge3$, which implies $2^x-2^{-x}>2x$. Then we can
show that
\begin{align*}
  \log_2\frac{1-2^{-m}}{m} = \log_2 2^{-\frac{m}{2}} + \log_2
  \frac{(2^{\frac{m}{2}}-2^{-\frac{m}{2}})}{m}
  \ge -\frac{m}{2},
\end{align*}
when $m>6$. Since we are considering the asymptotic complexity, the
cases when $m\le 6$ do not need to be considered. Substituting this
result to \eqref{eq:O1}, we have
$$
cm^{\log_23} \frac{k^2}{\log_2k} \le cm^{\log_2 3}
\frac{8(2^m-1)^2}{m^3}=8c\frac{(2^m-1)^2}{m^{\log_2\frac{8}{3}}}.
$$

Since $n=2^m-1$, the additive complexity of $\mathbf{Bu}$ as well as
$\mathbf{Av}$ is upper bounded by $O(n^2/(\log_2
n)^{\log_2\frac{8}{3}})$ and  is lower than $O(n^2/\log_2 n)$, the
additive complexity of the multiplication between an arbitrary $n
\times n$ binary matrix and a vector.

\subsection{Discussions}
\label{sec:discussion}
\begin{table*}[htb]
  \centering
  \caption{Asymptotic complexities of $(2^m-1)$-point DFT algorithms and their respective restrictions. All logarithms are base
    two.}
  \label{tab:cmp}
  \small
  \begin{tabular}{|c|c|c|c|c|c|}
    \hline
    \multirow{2}{*}{Alg.} & \multicolumn{2}{c|}{Restriction} &
    \multicolumn{3}{c|}{Complexities} \\
    \cline{2-6}
    & Fields & Lengths ($n$) & Multiplicative & Additive & Total\\
    \hline
    Wang \cite{Wang1988} & GF$(2^{m})$, $m$ arbitrary & $2^m-1$ & $O(n(\log n)^2)$ &
    $O(n(\log n)^2)$ & $O(n(\log n)^3)$\\
    \hline
    Cantor \cite{Cantor1991} & GF$(2^m)$, $m$ arbitrary  & $2^m-1$ &
    $O(n^2)$ & $O(n^2)$ & $O(n^2\log n)$  \\
    \hline
    Gao \cite{Gao2001} & GF$(2^{m})$, $m=2^K$ & $2^{m}-1$ & $O(n\log n \log \log n)$
    &  $O(n\log n \log \log n)$ & $O(n(\log n)^2\log\log n)$\\
    \hline
    Mateer \cite{Mateer2008} & GF$(2^{m})$, $m=2^K$ & $2^{m}-1$ & $O(n\log n)$ &
    $O(n\log n \log \log n)$ & $O(n(\log n)^2)$\\
    \hline
    CFFTs & GF$(2^{m})$, $m$ arbitrary & $2^m-1$ & $O(n(\log n)^{\log_2\frac{3}{2}})$ &
    $O(n^2/(\log n)^{\log_2\frac{8}{3}})$ & $O(n^2/(\log n)^{\log_2\frac{8}{3}})$ \\
    \hline
  \end{tabular}
\end{table*}
To evaluate the tightness of our asymptotic bounds, in
Fig.~\ref{fig:cmp} we compare our bounds with the actual
multiplicative and additive complexities of CFFTs in
\cite{Chen2009c}. In Fig.~\ref{fig:cmp}, we scale our bounds so that
they match the actual complexities when $n=1023$.
\begin{figure}[htb]
  \centering
  \includegraphics[width=\columnwidth]{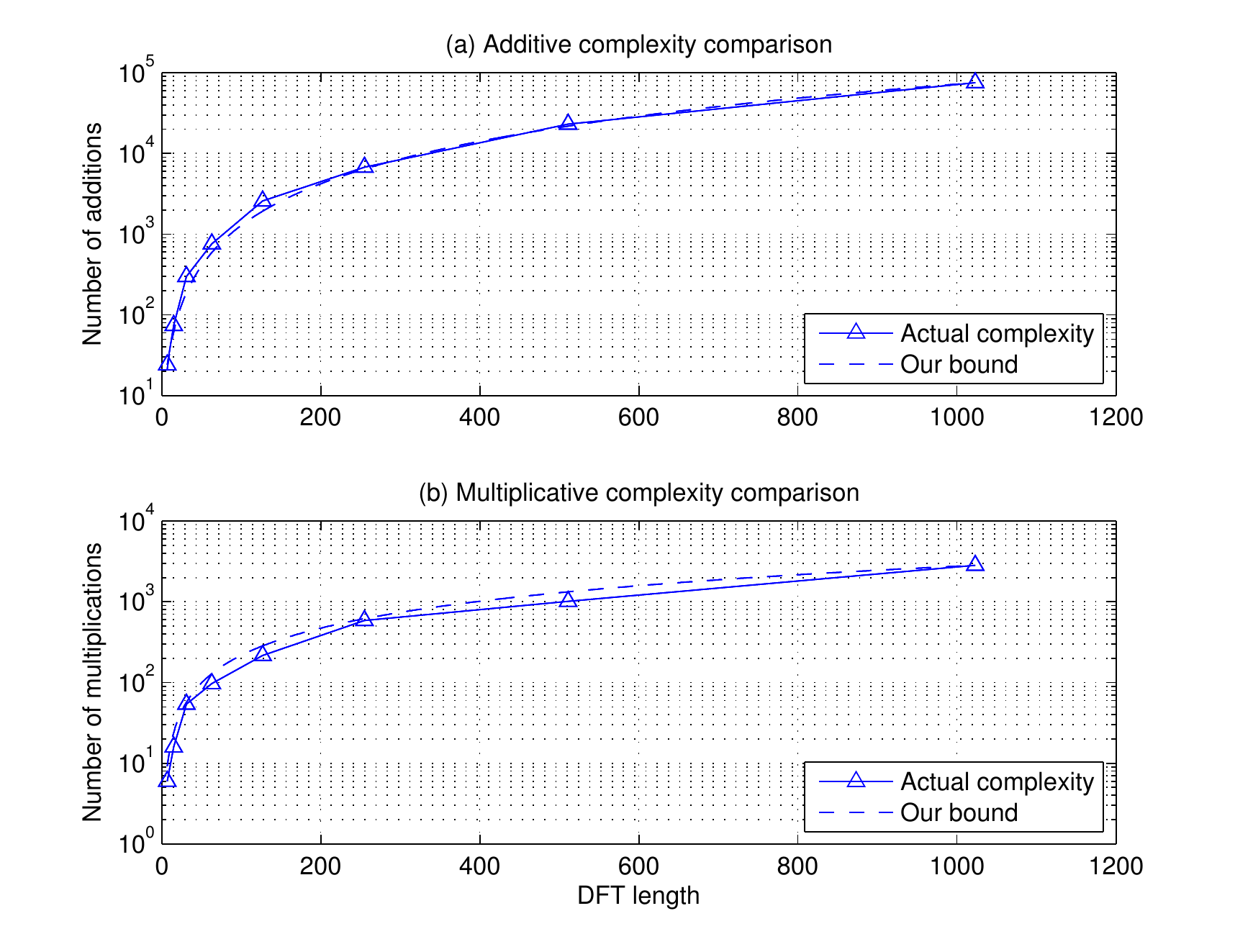}
  \caption{Comparison of the actual complexities and our bounds.}
  \label{fig:cmp}
  \vspace{-5mm}
\end{figure}
From Fig.~\ref{fig:cmp}, we can see that our bound on additive and the
multiplicative complexity is rather tight. The solid curves
corresponding to the actual complexities are very closed to the dashed
curve corresponding to the theoretical bounds. Therefore, the actual
additive and multiplicative complexities are on the order of
$O(n^2/(\log_2 n)^{\log_2\frac{8}{3}})$ and $O(n(\log_2
n)^{\log_2\frac{3}{2}})$, respectively. We remark that since we have
scaled the theoretical bounds to match the actual complexity at
certain points, it is not necessary that the computational complexity
is strictly smaller than the theoretical bound. 


We then compare asymptotic bounds on the complexities of CFFTs and
other algorithms in the literature.  In \cite{Wang1988}, a fast DFT
algorithm is proposed for GF$(2^{m})$, where $m$ can be any positive
integer. Both the additive and multiplicative complexities of this
algorithm are of $O(n(\log n)^2)$. When $m$ is a power of two, more
efficient algorithms are proposed. For example, Gao's algorithm in
\cite{Gao2001} has both the additive and multiplicative
complexities of order $O(n\log n\log\log n)$, and this result is
improved by Mateer's algorithm \cite{Mateer2008}, which reduces the
multiplicative complexity to $O(n\log n)$. When the length of the DFT
$n=s^r$ is a power of some integer $s$, \cite{Cantor1991} introduces a
fast DFT algorithm that has a computational complexity of
$rn(s-1)$. Note that this algorithm works for arbitrary algebras
rather than finite fields.

Tab.~\ref{tab:cmp} summarizes the asymptotic computational
complexities of the aforementioned algorithms when we apply them to
the DFTs with lengths of $2^m-1$ over GF$(2^m)$. To compare the total
complexities of these algorithms, we define the total complexity to be
a weighted sum of the additive and multiplicative complexities, and
assume that one multiplication over GF$(2^m)$ has the same complexity
as $2m-1$ additions over the same field. That is, the total complexity
is given by $\mbox{total}=(2m-1)\mbox{multiplicative} +
\mbox{additive}$. We note that this assumption comes from both the
hardware and software considerations \cite{Chen2009c}.  Since
we focus on $(2^m-1)$-point DFTs, we have that $m=\log_2(n+1)$ and
$2m-1$ is of order $O(\log n)$.

From Tab.~\ref{tab:cmp}, CFFTs have the lowest multiplicative
complexities among all algorithms. Furthermore, as shown in
Fig.~\ref{fig:cmp}(b), our asymptotic bound on the multiplicative
complexities of CFFTs is loose. These results confirm the advantage of
CFFTs in the multiplicative complexities. On the other hand, due to
their high additive complexities, the additive and overall
complexities of CFFTs are asymptotically suboptimal.  We emphasize the
different assumptions for the different DFT algorithms in
Tab.~\ref{tab:cmp}. For all DFT algorithms, it assumed that the length
$n$ and the size of the underlying field are such that a DFT is
well-defined. CFFTs and the fast DFT algorithm in \cite{Wang1988} have
no additional assumptions. In contrast, the other three algorithms in
Tab.~\ref{tab:cmp} all have additional constraints.  First, Cantor's
algorithm \cite{Cantor1991} requires $n=s^r$, which is often difficult
to satisfy. When $n=2^m-1$ (and other values of $n$), the only way to
satisfy this condition is $n=n^1$ due to Mih\v{a}ilescu's Theorem
\cite{Preda2004}. When $r=1$, Cantor's algorithm has a quadratic
additive and multiplicative complexities and does not have any computational
advantage. Furthermore, the algorithms in \cite{Mateer2008} work only
in a field GF$(2^{m})$ with $m=2^K$.

As shown in \cite{Chen2009c}, CFFTs have lower overall complexities
than all other DFT algorithms for most lengths up to thousands of
symbols over GF$(2^m)$ with $m\le 12$. The \emph{only exception} is
that for 255-point DFT over GF$(2^8)$, the overall complexity of
Mateer's algorithm is roughly 4\% smaller than a 255-point
CFFT. Although the overall complexities of CFFTs are asymptotically
suboptimal, CFFTs remain very significant since they have the smallest
overall complexities for most practical lengths.

\begin{figure}[htb]
  \centering
  \includegraphics[width=\columnwidth]{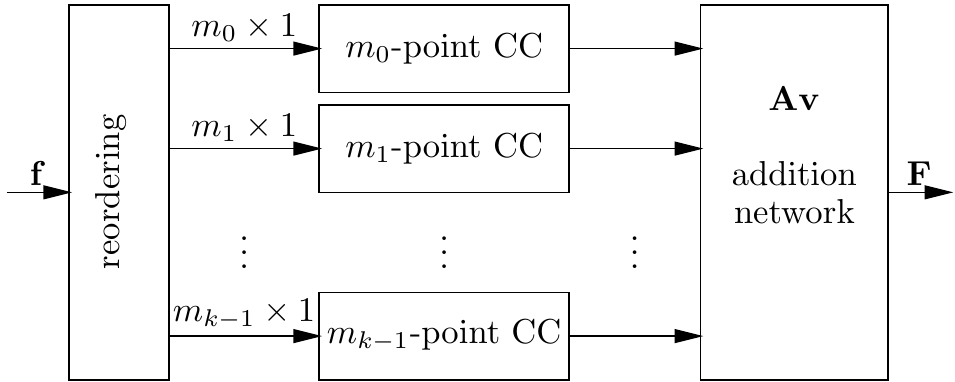}
  \caption{Implementation diagram of $(2^m-1)$-point CFFTs.}
  \label{fig:CFFT}
  \vspace{-3mm}
\end{figure}

\section{Hardware Implementations of CFFTs}
\label{sec:hardware}
The architecture of a $(2^m-1)$-point CFFT over GF$(2^m)$ is shown in
Fig.~\ref{fig:CFFT}. First, the input vector $\mathbf{f}$ is reordered
and divided into $k$ sub-vectors according to the cyclotomic
cosets. Then we perform an $m_i$-point cyclic convolution between each
sub-vector $\f_i'$ and its corresponding pre-computed vector
$\mathbf{b}_i$, as described in Sec.~\ref{sec:CFFT}. The cyclic
convolution results then go through the addition network to compute
$\mathbf{Av}$. In Fig.~\ref{fig:CFFT}, the reordering module can be
realized by wiring only and the cyclic convolution modules often can
be reused as most of the convolutions are of the same sizes. As the
computation of $\mathbf{Av}$ accounts for the majority of the total
computational complexity, the addition network in Fig.~\ref{fig:CFFT}
requires significant area and power in hardware implementations, which
makes it difficult to implement CFFTs in hardware. Though the additive
complexity of the additive network can be reduced by techniques such
as the CSE algorithm in \cite{Chen2009c}, the resulted addition
network lacks structure and hence is difficult for hardware
implementations.

\begin{figure}[htb]
  \centering
  \includegraphics[width=0.9\columnwidth]{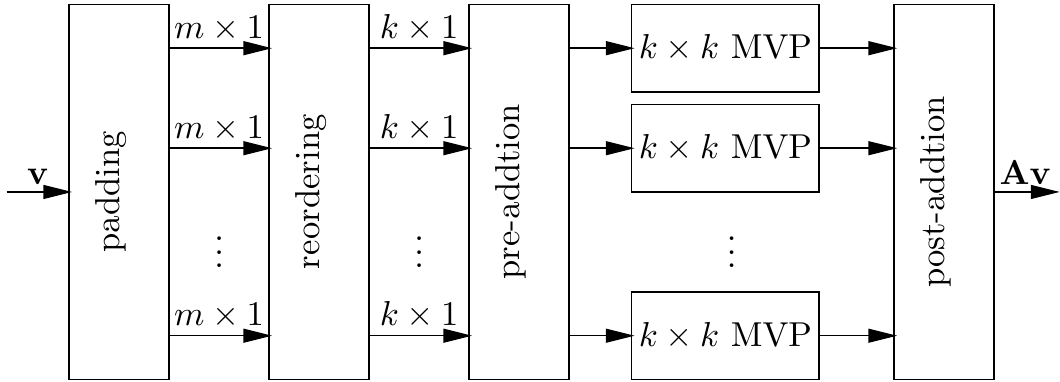}
  \caption{Our circuitry for the addition network in
    Fig.~\ref{fig:CFFT}. The reordering module has $k$ inputs and $m$
    outputs, and the pre-addition module outputs $O(m^{\log_2 3})$
    $k$-dimensional vectors. }
  \label{fig:Av}
  \vspace{-2mm}
\end{figure}

Our additive complexity analysis in Sec.~\ref{sec:addcfft} leads to a
structured addition network, which can be implemented by the
architecture shown in Fig.~\ref{fig:Av}. The vector $\mathbf{v}$, the
cyclic convolution results, is first divided into $k$ sub-vectors
according to the sizes of the cyclotomic cosets, and then each
sub-vector is extended to $m$-dimensional by padding zeros. The $k$
$m$-dimensional vectors are then reordered into $m$ $k$-dimensional
vectors according to \eqref{eq:reorder}. Since the addition network
follows the bilinear algorithm of an $m\times m$ cyclic convolution,
the pre- and post-additions in the bilinear algorithm correspond to
the pre-addition and post-addition modules in Fig.~\ref{fig:Av},
respectively, and the multiplications in the bilinear algorithm
correspond to the $k\times k$ matrix vector product (MVP) modules,
which compute the product between a $k\times k$ binary matrix and a
$k$-dimensional vector and can be achieved by simply additions.

The padding and reordering modules in Fig.~\ref{fig:Av} do not require
any logic, and the pre-addition and post-addition have a much smaller
complexity than the $k\times k$ MVP modules as shown in
Sec.~\ref{sec:complexity}. Therefore, the $k\times k$ MVP modules are
the primary source of the additive complexity of computing
$\mathbf{Av}$. To achieve high throughput, we can implement those
$k\times k$ modules in parallel. Furthermore, the CSE algorithm can
reduce the additive complexity of each module. Since $k$ is rather
small compared with $m$, the CSE algorithm is more effective in
simplifying a $k\times k$ MVP than an $m\times m$ one. However, as
each module corresponds to a different $k\times k$ matrix, the CSE
reduction results are different, and so are the addition networks for
each $k\times k$ MVP module. Therefore, those modules must be
implemented separately, and we cannot save any chip area by
implementing the circuitry in a serial or partly parallel fashion.

\begin{figure}[tb]
  \centering
  \includegraphics[width=0.9\columnwidth]{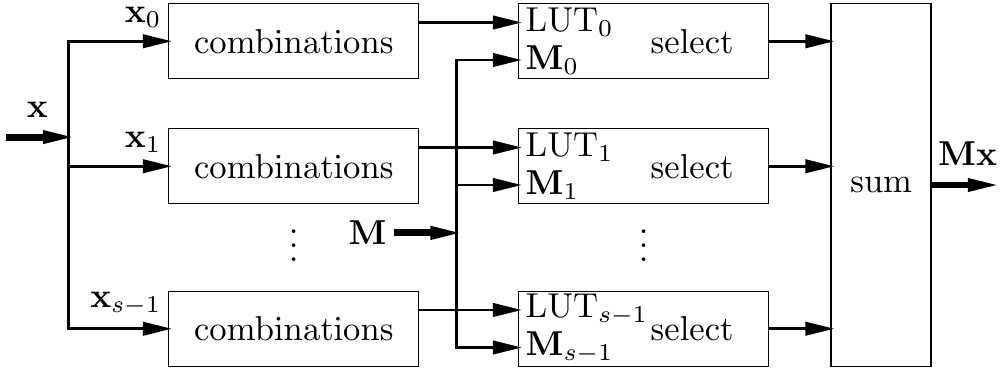}
  \caption{Our circuitry for $k\times k$ MVP modules using the
    Four-Russian algorithm. LUT stands for ``look-up table'' and
    $s=\lceil\log_2 k\rceil$.}
  \label{fig:FRA}
  \vspace{-4mm}
\end{figure}

To save area and power, the Four-Russian algorithm \cite{Aho1974} can
be used to implement the $k\times k$ modules in these cases, using the
architecture shown in Fig.~\ref{fig:FRA}. According the algorithm, the
circuitry has three stages to compute a $k\times k$ MVP, denoted as
$\mathbf{Mx}$. The first stage splits $\mathbf{x}$ into
$s=\lceil\log_2 k\rceil$ sub-vectors $\mathbf{x}_i$'s, and computes
all the binary combinations of elements in each $\mathbf{x}_i$. The
second stage partitions the matrix $\mathbf{M}$ into $1\times s$
sub-matrices $\mathbf{M}_i$'s accordingly, and computes
$\mathbf{M}_i\mathbf{x}_i$ by look-up tables generated from the first
stage. Finally the third stage sums up all
$\mathbf{M}_i\mathbf{x}_i$'s. Since the first and last stages in
Fig.~\ref{fig:FRA} are independent from $\mathbf{x}$ and $\mathbf{M}$,
they can be reused in a serial or partly parallel implementation to
save chip area. The second stage depends on $\mathbf{M}$, but it still
have a regular structure that is favorable in hardware
implementation. No memory or registers are needed in the fully
parallel implementation, and buffers used to hold the intermediate
results are required in the serial and partly parallel implementation.



In addition to its low complexity, the modular structures of the
architectures in Fig.~\ref{fig:Av} and Fig.~\ref{fig:FRA} are suitable
for hardware implementations. First, it is easy to apply architectural
techniques such as pipelining to these architectures for better clock
rate and throughput. Second, since the $k\times k$ MVP modules account
for the majority of the complexity, the modular structure provides
various tradeoff options between area, power, and throughput via
reusing the $k\times k$ MVP modules in Fig.~3 as well as the
combination and select modules in Fig.~4.


\bibliographystyle{IEEEtran}
\bibliography{IEEEabrv,FFT}

\end{document}